\DeclareSymbolFont{calsymbols}{OMS}{cmsy}{m}{n} \DeclareSymbolFontAlphabet{\mathcal}{calsymbols}
\def\E{\mathop\mathbb{E}\nolimits}
\begin{document}

\newtheorem{theorem}{Theorem} \newtheorem{lemma}{Lemma}
\newtheorem{definition}{Definition} \newtheorem{corollary}{Corollary}
\newtheorem{proposition}{Proposition}
\newtheorem{example}{Example}

\IEEEoverridecommandlockouts

\title{{\huge Delay Constrained Scheduling over Fading Channels: \\Optimal Policies for Monomial Energy-Cost Functions}\thanks{The work of J.~Lee is supported by a Motorola Partnership in Research Grant.}}

\author{\authorblockN{Juyul Lee and Nihar Jindal}\\
\authorblockA{Department of Electrical and Computer Engineering\\
University of Minnesota\\
E-mail: \{juyul, nihar\}@umn.edu}}

\maketitle

\begin{abstract}
	A point-to-point discrete-time scheduling problem of transmitting $B$ information bits within $T$ hard delay deadline slots is considered assuming that the underlying energy-bit cost function is a convex monomial. The scheduling objective is to minimize the expected energy expenditure while satisfying the deadline constraint based on information about the unserved bits, channel state/statistics, and the remaining time slots to the deadline. At each time slot, the scheduling decision is made without knowledge of future channel state, and thus there is a tension between serving many bits when the current channel is good versus leaving too many bits for the deadline.
	Under the assumption that no other packet is scheduled concurrently and no outage is allowed, we derive the optimal scheduling policy. Furthermore, we also investigate the dual problem of maximizing the number of transmitted bits over $T$ time slots when subject to an energy constraint.
\end{abstract}

\section{Introduction}

An opportunistic scheduling policy that adapts to the time-varying behavior of a wireless channel can achieve energy-efficient communication on the average in a long-term perspective. However, this opportunistic approach may not be appropriate for  short-term deadline constrained traffic. This paper considers scheduling a packet over a finite time horizon while efficiently adapting to wireless (fading) channel variations and taking care of the deadline constraint.

Our primal problem setting is the minimization of energy expenditure subject to a hard deadline constraint (i.e., a packet of $B$ bits must be scheduled within finite $T$ discrete-time slots) assuming that the scheduler has \emph{causal} knowledge of the channel state information (CSI). Causal CSI means that the scheduler knows the past and current CSI perfectly, but does not know future CSI. The scheduler is then required to make a decision at each time slot given the number of unserved bits, the number of slots left before the deadline, and causal CSI, in order to minimize the total energy expenditure. At each time slot, the scheduler deals with the tension between serving more bits when the channel is good and leaving too many bits to the end. Likewise, we consider the dual (scheduling over a finite time-horizon) problem of maximizing the transmitted bits subject to a finite energy constraint. We also briefly discuss scheduling problems when the CSI is available non-causally. We assume that no other packet is scheduled simultaneously and the hard delay deadline must be met (i.e., no outage is allowed). These finite-time horizon scheduling problems can be applicable to regularly arriving packets with hard delay deadlines, e.g., VoIP and video streaming.

Delay constrained scheduling over fading channel has been studied for various traffic models and delay constraints. Uysal-Biyikoglu and El Gamel \cite{UysalBiyikoglu_IT04} considered scheduling random packet arrivals over a fading channel and thus adapt (transmit power/rate) to both the channel state and queue state, and generally try to minimize \emph{average} delay. Many references can be found in \cite{UysalBiyikoglu_IT04}. Most cases do not admit analytical closed-form solution for causal (or online) scheduling. Instead, they proposed causal algorithms with heuristic modifications from non-causal (offline) policies. References \cite{Fu_WC06, Negi_IT02, Lee_WC08_accepted} take a slightly different perspective: single packet scheduling (no queue) with a hard delay deadline rather than an average delay constraint. 

The subject of this paper is the single-packet scheduling problem of \cite{Fu_WC06} specialized to the case where the required energy $E$ to transmit $b$ bits under channel state $g$ is governed by a convex monomial function, i.e., $E=b^n/g$, where $n$ denotes the monomial order.
The biggest advantage of using this monomial cost function is that it yields closed-form solutions in various scenarios, unlike the Shannon-cost function setting described in \cite{Lee_WC08_accepted}. As a result, it provides intuition on the interplay between the monomial order, delay deadline, and the channel states so that it ultimately suggests general ideas for a more general energy-cost function.
Although the monomial cost does not hold for operating at capacity in an AWGN channel, according to Zafer and Modiano \cite{Zafer_WITA07} and their reference \cite{Neely_INFOCOM03}, there is a practical modulation scheme that exhibits an energy-bit relation that can be well approximated by a monomial. Actually, Zafer and Modinano \cite{Zafer_WITA07} considered the same problem but for a continuous-time Markov process channel in continuous-time scheduling, i.e., the scheduler can transmit at any time instant rather than discrete slotted time. Although they provided a solution in the form of a set of differential equations, it is not possible to give a closed-form solution. On the other hand, we are able to derive a closed-form description of the optimal scheduler for the simpler block fading model (note that the continuous model is somewhat incompatible with block fading).

In this paper, we derive optimal scheduling policies for delay-constrained scheduling when the energy-bit cost is a convex monomial function. We also investigate the dual problem of maximizing the number of bits to transmit with a finite energy budget over a finite time horizon. In all cases, we are able to find analytical expressions that are functions of the queue state variables (energy state for the dual problem), current channel state and a quantity related to the fading distribution.

The resulting optimal schedulers determine the ratio of the number of bits to be allocated in the current slot to the deferred bits. For example, the optimal scheduling ratio of the number of bits to serve $b_t$ (from the remaining $\beta_t$ bits) at slot $t$ ($t$ denotes the number of remaining slots to the deadline) to the number of bits to defer $(\beta_t-b_t)$ for the primal energy minimization problem is given by
\begin{equation}
	b_t : (\beta_t-b_t) = g_t^{\frac{1}{n-1}} : \eta_{n,t},
\end{equation}
where $n$ is the order of monomial cost function, $g_t$ denotes the current channel state, and $\eta_{n,t}$ denotes a statistical quantity determined by the channel distribution and the number of remaining slots $t$. It will be shown later that $\eta_{n,t}$ is increasing with respect to $t$. 
If $\eta_{n,t}$ is small, $b_t\approx\beta_t$. However, as $\eta_{n,t}$ term increases, $b_t$ gets more affected by the channel state $g_t^{\frac{1}{n-1}}$. This suggests that the scheduler behaves very opportunistically when the deadline is far away ($t$ large) but less so as the deadline approaches, since $\eta_{n,t}$ is an increasing function of $t$. 

\section{Primal Problem: Energy Minimization}
\label{sec:problem_energy_min}

We consider the scheduling of a packet of $B$ bits in $T$ discrete time slots over a wireless channel as illustrated in Fig.~\ref{fig:scheduling_simple}.
\begin{figure}
	\centering
	\includegraphics[width=0.48\textwidth]{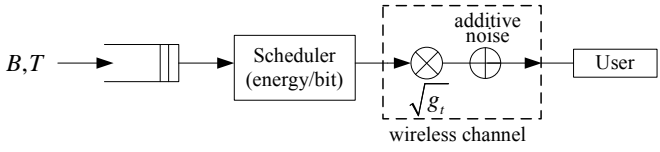}
	\caption{Point-to-point delay constrained scheduling}
	\label{fig:scheduling_simple}
\end{figure}
The scheduler determines the number of bits to allocate at each time slot using the fading realization/statistics to minimize the total transmit energy while satisfying the delay deadline constraint.
To make the scheduling problem tractable, we assume that no other packets are to be scheduled simultaneously and that no outage is allowed.

Throughout the paper, we use the following notations:
\begin{compactitem}
	\item $T$: the number of time slots that a packet of $B$ bits must be transmitted within; the delay deadline.
	\item $t$: discrete-time index in descending order (starting at $t=T$ and all the down to $t=1$); $t$ denotes the number of remaining slots.
	\item $g_t$: the channel state (in power unit) at time slot $t$.
	\item $b_t$: the number of transmitted bits in slot $t$ (there is no integer constraint on $b_t$).
	\item $\beta_t$: the remaining bits at the beginning of time slot $t$; the queue state.
	\item $E_t$: the energy cost in time slot $t$.
\end{compactitem}

The channel states $\{g_t\}_{t=1}^T$ are assumed to be independently and identically distributed (i.i.d.).  If the scheduler has only causal knowledge of the channel state (i.e., at slot $t$, the scheduler knows $g_T, g_{T-1},\cdots,g_t$ but does not know $g_{t-1},g_{t-2},\cdots, g_1$), we refer to this as \emph{causal scheduling}. If the scheduler has non-causal knowledge of the channel state in advance (i.e., at slot $T$, the scheduler knows $g_T,g_{T-1},\cdots, g_1$), we refer to it as \emph{non-causal scheduling}. This paper mainly deals with causal scheduling problems.

In this paper, we assume that the energy expenditure $E_t$ is inversely proportional\footnote{The $1/g_t$ dependence is due to the fact that the received energy is the product of the transmitted energy $E_t$ and the channel state $g_t$. Note, however, that any other decreasing function of $g_t$ could be considered by simply performing a change of variable on $g_t$.} to the channel state $g_t$ and is related to the transmitted bits $b_t$ by a monomial function:
\begin{equation} \label{eq:Et_exp}
	E_t(b_t,g_t;n) = \frac{b_t^n}{g_t},
\end{equation}
where $n$ denotes the order of monomial. If $n=1$, the resulting optimization becomes a linear program and thus a ``one-shot''  policy is optimal \cite{Lee_asymptotic_unpublished}. We assume that $n>1$ (to be convex) and $n\in \mathbb{R}$ ($n$ is not necessarily an integer), where $\mathbb{R}$ denotes the real number set. A practical modulation scheme that exhibits a monomial energy-cost behavior was illustrated in \cite{Zafer_WITA07}, where the monomial order is $n=2.67$.

A scheduler is a sequence of functions $\{b_t(\beta_t,g_t)\}_{t=1}^T$ with $0\le b_t\le \beta_t$.  For causal scheduling, $b_t$ depends only on the current channel state $g_t$ and not on the past and future states because of the i.i.d.~assumption and causality\footnote{The i.i.d.~assumption makes us ignore the past CSI $g_T, g_{T-1},\cdots, g_{t+1}$ and the causality does not allow to exploit the future CSI $g_{t-1},g_{t-2},\cdots, g_1$. As a result, the decision at each time slot should be made based only on the current CSI $g_t$, i.e., $b_t(\beta,g_t)$ instead of $b_t(\beta,g_T,\cdots,g_1)$.}. The optimal scheduler is determined by minimizing the total expected energy cost: \vspace{-10pt}
\begin{eqnarray} 
	&\min\limits_{b_T,\cdots,b_1}& \E\left[\sum_{t=1}^T E_t(b_t,g_t;n) \right] \nonumber \\
	&\text{subject to}& \sum_{t=1}^T b_t =B \label{eq:min_E_sums} \\
	&& b_t\ge 0,\qquad \forall t, \nonumber
\end{eqnarray}
where $\E$ denotes the expectation operator. 

\section{Causal Energy Minimization Scheduling}

As done in \cite{Fu_WC06}\cite{Lee_WC08_accepted}, a sequential formulation of the optimal causal scheduling of \eqref{eq:min_E_sums} can be established by introducing a state variable $\beta_t$ as in standard dynamic programming \cite{Bertsekas_DP1_Book05}. As defined in Section \ref{sec:problem_energy_min}, $\beta_t$ denotes the remaining bits that  summarizes the bit allocation up until the previous time step. At time step $t$, $g_{t-1},\cdots,g_1$ are unknown but $g_t$ is known. Thus, the optimization \eqref{eq:min_E_sums} becomes:
\begin{equation*}
	\min_{0\le b_t\le\beta_t} \left(E_t(b_t,g_t;n)+ \E\left[\sum_{s=1}^{t-1} E_s(b_s,g_s;n)\Bigg\vert b_t\right]\right),\; t\ge 2.
\end{equation*}
With \eqref{eq:Et_exp}, we obtain the following DP:
\begin{equation} \label{eq:Jt_csl}
	J_t^\text{csl}(\beta_t,g_t;n) =
	\begin{cases}
		\min\limits_{0\le b_t\le \beta_t} \left(\frac{b_t^n}{g_t}+\bar{J}_{t-1}^\text{csl}(\beta_t-b_t;n)\right), & t\ge 2\\
		\frac{\beta_1^n}{g_1}, & t=1,
	\end{cases}
\end{equation}
where the first term $\frac{b_t^n}{g_t}$ denotes the current energy cost and the second term $\bar{J}_{t-1}^\text{csl}(\beta; n)=\E_{g}[J_{t-1}^\text{csl}(\beta,g; n)]$ denotes the cost-to-go function, which is the expected future energy cost (because future channel states are unknown, only expectations can be considered) to serve $\beta$ bits in $(t-1)$ slots if the optimal control policy is used at each future step. Thus, the optimal bit allocation is determined by balancing the current energy cost and the expected future energy cost. Because of the hard delay constraint, all the unserved bits must be served at $t=1$ regardless of the channel condition, i.e., $b_1=\beta_1$ and thus the resulting energy cost is given by $\frac{\beta_1^n}{g_1}$. This dynamic optimization can be solved:
\begin{theorem} \label{thm:bt_csl}
	The optimal solution to the causal energy minimization scheduling problem \eqref{eq:Jt_csl} is given by
	\begin{equation} \label{eq:bt_csl}
	\boxed{ b_t^\text{csl}(\beta_t, g_t;n) = \begin{cases} \beta_t \left( \frac{(g_t)^{\frac{1}{n-1}}}{(g_t)^{\frac{1}{n-1}}+\left(\frac{1}{\xi_{n,t-1}}\right)^{\frac{1}{n-1}}}\right), & t\ge 2\\
	\beta_1,& t=1, \end{cases}}
	\end{equation}
	where the constants $\xi_{n,t}$ are determined as:
\begin{equation} \label{eq:xi}
	\xi_{n,t} = \begin{cases}
		\E\left[\left(\frac{1}{(g_t)^{\frac{1}{n-1}}+ (1/\xi_{n,t-1})^{\frac{1}{n-1}}}\right)^{n-1}\right],& t\ge 2,\\
		\E\left[\frac{1}{g}\right], & t=1,
	\end{cases}
\end{equation}
	and the expected energy cost is given by
\begin{equation} \label{eq:barJt_csl}
	\bar{J}_{t}^\text{csl}(\beta;n) = \beta^n \xi_{n,t},\quad t=1,2,\cdots.
\end{equation}
\end{theorem}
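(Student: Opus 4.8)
The plan is to prove the theorem by backward induction on $t$, taking as the inductive hypothesis the closed-form cost-to-go $\bar{J}_t^\text{csl}(\beta;n)=\beta^n\xi_{n,t}$ asserted in \eqref{eq:barJt_csl}. The structural fact that drives everything is that the cost-to-go is \emph{homogeneous of degree $n$} in the number of remaining bits $\beta$: once $\bar{J}_{t-1}^\text{csl}$ is known to have this form, the stage-$t$ Bellman minimization in \eqref{eq:Jt_csl} collapses to a single-variable convex program whose minimizer has a closed form, and the resulting value is again $\beta^n$ times a constant. Thus the split formula \eqref{eq:bt_csl} and the recursion \eqref{eq:xi} emerge together from the same induction.

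For the base case $t=1$, the deadline forces $b_1=\beta_1$, so $J_1^\text{csl}(\beta,g;n)=\beta^n/g$ and hence $\bar{J}_1^\text{csl}(\beta;n)=\E[1/g]\,\beta^n=\beta^n\xi_{n,1}$, matching \eqref{eq:barJt_csl} and the $t=1$ case of \eqref{eq:xi}. For the inductive step, assume $\bar{J}_{t-1}^\text{csl}(\beta;n)=\beta^n\xi_{n,t-1}$. Substituting into \eqref{eq:Jt_csl} reduces the problem to
\begin{equation*}
J_t^\text{csl}(\beta_t,g_t;n)=\min_{0\le b_t\le\beta_t}\left(\frac{b_t^n}{g_t}+(\beta_t-b_t)^n\xi_{n,t-1}\right).
\end{equation*}
Since $n>1$, both terms are convex in $b_t$, so the objective is strictly convex and its unique stationary point is the global minimizer.

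Setting the derivative to zero gives $b_t^{n-1}/g_t=(\beta_t-b_t)^{n-1}\xi_{n,t-1}$, i.e.\ the ratio $b_t/(\beta_t-b_t)=(g_t\xi_{n,t-1})^{1/(n-1)}$. Solving for $b_t$ and dividing numerator and denominator by $g_t^{1/(n-1)}$ yields exactly the boxed form \eqref{eq:bt_csl}; because the ratio is positive and finite, the minimizer automatically lies in $[0,\beta_t]$, so the constraint is inactive. To close the induction I substitute $b_t^\text{csl}$ back into the objective. Setting $r=(g_t\xi_{n,t-1})^{1/(n-1)}$ so that $b_t^\text{csl}=\frac{r}{1+r}\beta_t$ and $\beta_t-b_t^\text{csl}=\frac{1}{1+r}\beta_t$, and using the key identity $r^{n-1}=g_t\xi_{n,t-1}$ (equivalently $r^n/g_t=r\,\xi_{n,t-1}$), the two terms combine to $J_t^\text{csl}=\beta_t^n\,\xi_{n,t-1}/(1+r)^{n-1}$. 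Factoring $\xi_{n,t-1}^{1/(n-1)}$ out of $(1+r)$ simplifies this to $\beta_t^n\big(g_t^{1/(n-1)}+(1/\xi_{n,t-1})^{1/(n-1)}\big)^{-(n-1)}$, and taking $\E_{g_t}$ gives $\bar{J}_t^\text{csl}(\beta;n)=\beta^n\xi_{n,t}$ with $\xi_{n,t}$ precisely as in \eqref{eq:xi}.

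The computation is elementary throughout; the only step needing care is the final algebraic collapse, where $(1+r)^{n-1}$ must be rewritten as $\xi_{n,t-1}\big(g_t^{1/(n-1)}+(1/\xi_{n,t-1})^{1/(n-1)}\big)^{n-1}$ so that the $\xi_{n,t-1}$ factor cancels and the integrand matches the definition of $\xi_{n,t}$. I would therefore treat verifying this cancellation, rather than the one-dimensional optimization itself, as the crux of the write-up.
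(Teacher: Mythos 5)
Your proposal is correct and follows essentially the same route as the paper's own proof: induction on $t$ with the hypothesis $\bar{J}_{t-1}^\text{csl}(\beta;n)=\beta^n\xi_{n,t-1}$, reduction of the Bellman step to a strictly convex one-variable problem solved by the first-order condition, and substitution plus expectation over $g_t$ to recover the recursion \eqref{eq:xi}; you simply make explicit the algebraic cancellation and the interiority of the minimizer, which the paper leaves implicit. (One trivial slip: to pass from $b_t/(\beta_t-b_t)=(g_t\xi_{n,t-1})^{1/(n-1)}$ to the boxed form you divide numerator and denominator by $\xi_{n,t-1}^{1/(n-1)}$, not $g_t^{1/(n-1)}$, but the resulting expression you state is the correct one.)
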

\vspace{5pt}
\begin{proof}
	We use mathematical induction to find $b_t^\text{csl}(\cdot,\cdot;n)$ and $\bar{J}_t^\text{csl}(\cdot;n)$. At $t=1$, \eqref{eq:bt_csl} and \eqref{eq:barJt_csl} are true by definition. 	If we suppose that \eqref{eq:barJt_csl} is true for $t-1$, the optimization \eqref{eq:Jt_csl} becomes
	\begin{equation} \label{eq:Jt_temp}
		J_t^\text{csl}(\beta_t,g_t;n) = \min_{0\le b_t\le \beta_t} \left(\frac{b_t^n}{g_t} + (\beta_t-b_t)^n \xi_{n,t-1}\right),
	\end{equation}
	whose solution is obtained by differentiating the objective and setting to zero to result in \eqref{eq:bt_csl}.
	Substituting \eqref{eq:bt_csl} into \eqref{eq:Jt_temp} and then taking expectation with respect to $g_t$, we obtain \eqref{eq:barJt_csl}. Therefore, the result follows by induction. 
\end{proof}

The scheduling function \eqref{eq:bt_csl} can be intuitively explained in the following way. 
The ratio of the number of allocated bits $b_t$ to the number of deferred bits $(\beta_t-b_t)$ is equal to the ratio of $g_t^{\frac{1}{n-1}}$ to $\left(1/\xi_{n,t-1}\right)^\frac{1}{n-1}$, i.e.,
\begin{equation}
	b_t : \underbrace{(\beta_t-b_t)}_{\beta_{t-1}} = (g_t)^{\frac{1}{n-1}} : \underbrace{\eta_{n,t}}_{\text{threshold}}
\end{equation}
where $\eta_{n,t}=\left(1/\xi_{n,t-1}\right)^\frac{1}{n-1}$. As expected, the optimal scheduler is \emph{opportunistic} in that the number of transmitted bits are proportional to the channel quality. Furthermore, the thresholds $\eta_{n,t}$ are increasing in $t$ (shown later) which implies that the scheduler is more \emph{selective} when the delay deadline is far away (large $t$). When the deadline is far away, the scheduler transmits a large fraction of the unserved bits only when the channel state is very good; because many slots remain until the deadline, there is still a good chance of seeing a very good channel state. On the other hand, as the deadline approaches (small $t$) the scheduler is still opportunistic but must become less selective because only a few opportunities for good channel states remain before the deadline is reached.

Figure \ref{fig:xi_nt} illustrates $\eta_{n,t}\left(= (1/\xi_{n,t})^{\frac{1}{n-1}}\right)$ and $\xi_{n,t}$ for a truncated exponential distribution.
\begin{figure}
	\centering
	\subfloat[$(1/\xi_{n,t})^{\frac{1}{n-1}}$]{\includegraphics[width=0.49\textwidth]{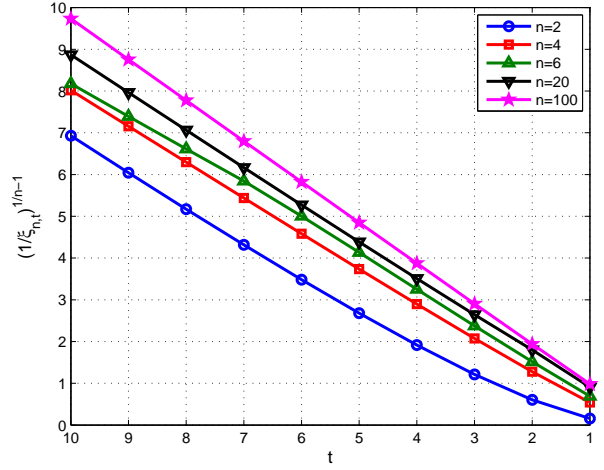}}
	\subfloat[$\xi_{n,t}$]{\includegraphics[width=0.49\textwidth]{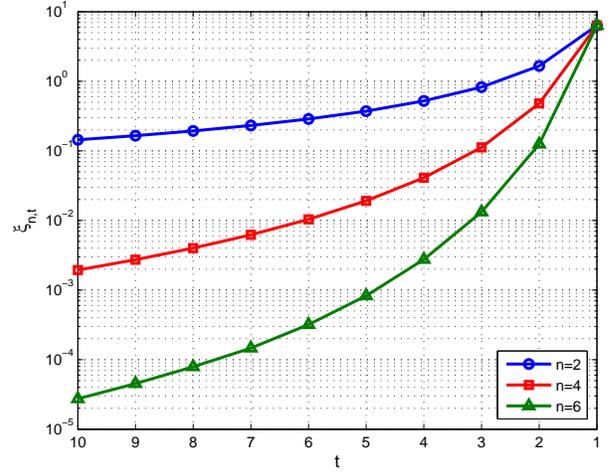}}
	\caption{$\xi_{n,t}$ for the truncated exponential random variable $g$ with threshold $0.001$, i.e., $f(g)=e^{-(g-0.001)}$ if $g\ge 0.001$ and $f(g)=0$ if $g<0.001$, where $f$ denotes the PDF of $g$.}
	\label{fig:xi_nt}
\end{figure}
As can be seen in Fig.~\ref{fig:xi_nt}a, $\eta_{n,t}$ increases with respect to $t$ and this can be shown analytically: \vspace{-15pt}
\begin{equation} \label{eq:xi_decrease}
	\begin{split}
		\xi_{n,t} &= \E\left[\left(\frac{1}{g_t^{\frac{1}{n-1}}+\left(\frac{1}{\xi_{n,t-1}}\right)^{\frac{1}{n-1}}}\right)^{n-1}\right] \\
		& \le \E\left[\left(\frac{1}{\left(\frac{1}{\xi_{n,t-1}}\right)^{\frac{1}{n-1}}}\right)^{n-1}\right]=\xi_{n,t-1}
	\end{split}
\end{equation}
where the inequality is due to $g_t \ge 0$. This shows the delay-limited opportunistic behavior mentioned before. From \eqref{eq:barJt_csl}, the value $\xi_{n,t}$ denotes the expected energy cost for a unit bit, i.e., $\beta_t=1$. Thus, $\xi_{n,t}$, as illustrated in Fig.~\ref{fig:xi_nt}b, shows how much the expected energy unit cost (for transmitting one bit) can be reduced as the time span increases.

Another interesting fact is that the policy \eqref{eq:bt_csl} utilizes\footnote{A time slot $t$ is called \emph{utilized} if a positive bit is scheduled, i.e., $b_t>0$.} all the time slots. This is because both $(g_t)^{\frac{1}{n-1}}$ and $\left(1/\xi_{n,t-1}\right)^\frac{1}{n-1}$ are always positive for typical fading distributions. 
For the Shannon cost function problem \cite{Lee_WC08_accepted}, however, there exist time slots that are not utilized depending on the values of $B$ and $T$. This does not admit an analytical solution because the associated cost-to-go function takes a complicated form. 

\subsection{Special Cases}

In this subsection, we examine the optimal policy \eqref{eq:bt_csl} for two values of $n$: $n=2$ and $n\to\infty$.

\subsubsection{Quadratic Cost ($n=2$)}

By substituting $n=2$ in \eqref{eq:bt_csl} and \eqref{eq:xi}, we have
\begin{equation}
	b_t^\text{csl}(\beta_t,g_t; n=2) = \beta_t \left(\frac{g_t}{g_t+\frac{1}{\xi_{2,t-1}}}\right),
\end{equation}
where
\begin{equation}
	\xi_{2,t} = \begin{cases}
		\E\left[\frac{1}{g_t + \frac{1}{\xi_{2,t-1}}}\right],& t\ge 2,\\
		\E\left[\frac{1}{g}\right],& t=1.
	\end{cases}
\end{equation}
Thus, the allocated bits $b_t$ and the deferred bits $(\beta_t-b_t)$ have the same ratio with $g_t$ and $1/\xi_{2,t-1}$. 

\subsubsection{Infinite Order Cost ($n=\infty$)}

We examine the limiting behavior of the scheduling policy \eqref{eq:bt_csl} as $n\to\infty$. First, we observe that
\begin{lemma} \label{lem:lim_n_xi}
	\begin{equation} \label{eq:lim_n_xi}
		\lim_{n\to\infty}	\left(\frac{1}{\xi_{n,t}}\right)^{\frac{1}{n-1}} = t.
	\end{equation}
\end{lemma}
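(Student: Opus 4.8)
The plan is to prove \eqref{eq:lim_n_xi} by induction on $t$. Write $m=n-1$ and abbreviate $\zeta_{n,t}:=\left(1/\xi_{n,t}\right)^{1/m}$, so that the claim becomes $\lim_{m\to\infty}\zeta_{n,t}=t$ and the recursion \eqref{eq:xi} reads $\zeta_{n,t}^{-m}=\E\!\left[\left(g^{1/m}+\zeta_{n,t-1}\right)^{-m}\right]$ for $t\ge 2$, with $\zeta_{n,1}=\left(\E[1/g]\right)^{-1/m}$. The base case is immediate: since $0<\E[1/g]<\infty$, $\zeta_{n,1}=\exp\!\left(-\tfrac1m\ln\E[1/g]\right)\to 1$, matching $t=1$. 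For the inductive step I assume $\zeta_{n,t-1}\to t-1$ and sandwich $\zeta_{n,t}$ between a $\limsup$ and a $\liminf$ that both tend to $t$.

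For the upper bound I would apply Jensen's inequality to the convex map $x\mapsto x^{-m}$ on $(0,\infty)$: $\zeta_{n,t}^{-m}=\E\!\left[(g^{1/m}+\zeta_{n,t-1})^{-m}\right]\ge\left(\E[g^{1/m}]+\zeta_{n,t-1}\right)^{-m}$, hence $\zeta_{n,t}\le\E[g^{1/m}]+\zeta_{n,t-1}$. Because $g^{1/m}\to 1$ pointwise and $g^{1/m}\le 1+g$ with $\E[g]<\infty$, dominated convergence gives $\E[g^{1/m}]\to 1$; together with the induction hypothesis this yields $\limsup_{m\to\infty}\zeta_{n,t}\le 1+(t-1)=t$.

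The lower bound is the crux, and its difficulty depends on the fading law. If $g$ is bounded away from zero, say $g\ge g_{\min}>0$ (as in the truncated-exponential example of Fig.~\ref{fig:xi_nt}), it is immediate: $g^{1/m}\ge g_{\min}^{1/m}$ gives $\zeta_{n,t}^{-m}\le(g_{\min}^{1/m}+\zeta_{n,t-1})^{-m}$, i.e. $\zeta_{n,t}\ge g_{\min}^{1/m}+\zeta_{n,t-1}\to t$, closing the sandwich. The hard part is a law with mass near $0$, where $\mathrm{ess\,inf}\,g^{1/m}=0$ and the naive bound $g^{1/m}\ge 0$ only gives $\zeta_{n,t}\ge\zeta_{n,t-1}\to t-1$, losing the crucial $+1$. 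To recover it I would show that the expectation is, to exponential order, governed by the bulk $g=\Theta(1)$ rather than by the vanishing-probability event $\{g\approx 0\}$: splitting at a threshold $\delta_m=e^{-\gamma m}$ bounds the contribution of $\{g\ge\delta_m\}$ by $(e^{-\gamma}+\zeta_{n,t-1})^{-m}$ and, via a negative-moment tail estimate $P(g<\delta_m)\le\delta_m^{s}\,\E[g^{-s}]$, bounds the contribution of $\{g<\delta_m\}$ by $\zeta_{n,t-1}^{-m}\,\delta_m^{s}\,\E[g^{-s}]$; sending $m\to\infty$ and then $\gamma\to 0$ (with $s\to\infty$ when all negative moments are finite) drives the effective base to $1+\zeta_{n,t-1}$ and gives $\liminf_{m\to\infty}\zeta_{n,t}\ge t$. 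I expect this tail estimate — quantifying that the region in which $g^{1/m}$ sits appreciably below $1$ carries negligible exponential weight — to be the main technical obstacle; the standing assumption that $g$ is bounded below removes it entirely. Combining the two bounds closes the induction and establishes $\lim_{m\to\infty}\zeta_{n,t}=t$.
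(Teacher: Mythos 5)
Your proposal is correct in the regime the paper actually covers, and it takes a genuinely different route. The paper's proof shares your inductive skeleton (base case $t=1$, then the recursion \eqref{eq:xi}), but its entire inductive step is a one-line appeal to the classical norm limit $\lim_{n\to\infty}\left(\E[\phi^n]\right)^{1/n}=\text{Max}\,\phi$ (the ``effective upper bound,'' citing Hardy et al.), reading the recursion as an $L^{n-1}$-type norm of $\phi=\bigl(g^{1/(n-1)}+(1/\xi_{n,t-1})^{1/(n-1)}\bigr)^{-1}$ and asserting this norm tends to $1/t$. Your Jensen-plus-dominated-convergence upper bound and monotonicity lower bound replace that appeal with elementary, self-contained estimates. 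The comparison is instructive because the paper's step is loose in exactly the spot you identify as the crux: the function $\phi$ depends on $n$ (through $g^{1/(n-1)}$ and through $\xi_{n,t-1}$), so the fixed-$\phi$ norm limit does not apply verbatim, and worse, if $\operatorname{ess\,inf} g=0$ then $\text{Max}\,\phi=(\xi_{n,t-1})^{1/(n-1)}\to 1/(t-1)$, so a literal reading of the cited fact would produce the \emph{wrong} limit. The paper is implicitly assuming fading bounded away from zero (its running example truncates the exponential at $10^{-3}$, and $\E[1/g]<\infty$ is needed for $\xi_{n,1}$ even to exist); in precisely that regime your lower bound is the one-line $g_{\min}^{1/m}$ estimate, so your sandwich constitutes a complete and rigorous proof of everything the paper establishes, and your tail-splitting sketch even extends it a bit (e.g.\ to lognormal fading, where every negative moment is finite, though $\operatorname{ess\,inf} g = 0$). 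The one caveat is the one you flag yourself: with a fixed negative-moment order $s$, optimizing the split point $\gamma$ stalls strictly below $t$, so your extension genuinely requires all negative moments finite; laws such as Nakagami (with $\E[1/g]<\infty$ but only finitely many negative moments) satisfy the lemma but would need a finer, Laplace-type estimate of the small-$g$ contribution that neither you nor the paper supplies. (Purely cosmetically, if this were merged into the paper you should rename your abbreviation $\zeta_{n,t}$: the paper already uses $\zeta_{n,t}$ for a different constant in the dual problem, and it denotes your quantity by $\eta_{n,t}$.)
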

\begin{proof}
This can be shown by the induction. When $t=1$, \eqref{eq:lim_n_xi} holds trivially. If we suppose \eqref{eq:lim_n_xi} holds for $t-1$, then
\begin{equation}
	\begin{split}
	&\lim_{n\to\infty} \left(\xi_{n,t}\right)^{\frac{1}{n-1}} \\
	&= \lim_{n\to\infty} \left(\E\left[\frac{1}{\left(g_{t-1}^{\frac{1}{n-1}}+\left(\frac{1}{\xi_{n,t-1}}\right)^{\frac{1}{n-1}}\right)^{n-1}}\right]\right)^{\frac{1}{n-1}} \\
	&= \frac{1}{t}, 
	\end{split}
\end{equation}
where the last equality is due $\lim_{n\to\infty}\left(\E[\phi^n]\right)^{\frac{1}{n}}= \text{Max} \phi$ and $\text{Max} \phi$ denotes the ``effective upper bound'' of $\phi$ (see Chap.~6 in \cite{Hardy_Book01} for mathematical technicality). Hence, the induction follows.
\end{proof}
Figure \ref{fig:xi_nt}a illustrates the values of $(1/\xi_{n,t})^{\frac{1}{n-1}}$ for the truncated exponential variable.
This shows that $\left(1/\xi_{n,t}\right)^\frac{1}{n-1}$ is increasing linearly with respect to $t$ for large $n$, which agrees with Lemma \ref{lem:lim_n_xi}.

With the limit in Lemma \ref{lem:lim_n_xi}, we can immediately reach the simplified scheduling policy summarized below:
\begin{theorem} \label{thm:bt_csl_limit}
	As $n\to\infty$, the scheduling policy \eqref{eq:bt_csl} becomes the equal-bit scheduler, i.e.,
	\begin{equation}
		b_t^\text{csl}(\beta_t,g_t; n=\infty) = \frac{\beta_t}{t},\quad t=1,2,\cdots.
	\end{equation}
\end{theorem}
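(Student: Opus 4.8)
The plan is to substitute the asymptotics established in Lemma~\ref{lem:lim_n_xi} directly into the closed-form policy \eqref{eq:bt_csl} and evaluate the resulting ratio term by term. For $t\ge 2$ the bracketed factor in \eqref{eq:bt_csl} is a quotient whose numerator is $(g_t)^{\frac{1}{n-1}}$ and whose denominator is $(g_t)^{\frac{1}{n-1}}+\left(1/\xi_{n,t-1}\right)^{\frac{1}{n-1}}$. Since division is continuous wherever the denominator limit is nonzero, it suffices to take the $n\to\infty$ limit of the two pieces separately and then combine them.

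First I would note that for any fixed channel realization $g_t>0$ the exponent $\frac{1}{n-1}\to 0$, so $(g_t)^{\frac{1}{n-1}}\to 1$. Next I would invoke Lemma~\ref{lem:lim_n_xi}, taking care to apply \eqref{eq:lim_n_xi} at the shifted index $t-1$: this gives $\left(1/\xi_{n,t-1}\right)^{\frac{1}{n-1}}\to t-1$. Combining the two, the denominator tends to $1+(t-1)=t>0$, and hence the bracketed factor tends to $1/t$, yielding $\lim_{n\to\infty} b_t^\text{csl}(\beta_t,g_t;n)=\beta_t/t$ for every $t\ge 2$. The boundary case $t=1$ is immediate, since \eqref{eq:bt_csl} already gives $b_1=\beta_1=\beta_1/1$ for all $n$, so the equal-bit formula holds there trivially as well, covering all $t\ge 1$.

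I do not anticipate a serious obstacle. The only points requiring care are to apply Lemma~\ref{lem:lim_n_xi} at argument $t-1$ rather than $t$ (so that the relevant limit is $t-1$, giving the denominator $t$ and not $t+1$), and to observe that this denominator limit is strictly positive so that the quotient-of-limits step is legitimate. All of the genuine analytic content, namely the effective-supremum evaluation of $\lim_{n\to\infty}\left(\E[\phi^n]\right)^{1/n}$, has already been absorbed into Lemma~\ref{lem:lim_n_xi}; what remains is an elementary continuity argument.
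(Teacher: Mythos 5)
Your proposal is correct and follows exactly the route the paper intends: the paper derives Theorem~\ref{thm:bt_csl_limit} ``immediately'' from Lemma~\ref{lem:lim_n_xi}, and your argument simply spells out that substitution, correctly applying \eqref{eq:lim_n_xi} at index $t-1$, noting $(g_t)^{\frac{1}{n-1}}\to 1$ for fixed $g_t>0$, and checking the denominator limit $1+(t-1)=t$ is nonzero so the quotient-of-limits step is valid. No gaps; your write-up is in fact more explicit than the paper's.
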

That is, when the order of monomial cost function tends to infinity, scheduling equal number of bits at every slot regardless of the channel state becomes the optimal policy. 
Note that we considered only monomial orders $n>1$ in the derivation, as when $n=1$, the optimal policy is the one-shot policy \cite{Lee_asymptotic_unpublished}, which completely depends on the channel state. 
From these two extreme cases, we can deduce that the effect of channel state on the scheduling function decreases as the order of monomial cost function increases, or in other words the optimal scheduler becomes less opportunistic as the monomial order $n$ increases.

\section{Dual Problem: Rate Maximization}

Thus far, we have considered problems of minimizing energy expenditure to transmit fixed $B$ information bits in a finite time horizon $T$. It is of interest to consider the dual of this, i.e., maximizing the number of bits transmitted with a finite energy $E$ over a finite time horizon $T$. We refer to this as the \emph{dual scheduling problem}, while referring to the original problem as the \emph{primal scheduling problem}. Negi and Cioffi \cite{Negi_IT02} considered this dual problem for the Shannon energy-bit cost function and provided solutions in DP, but not in closed form. In this work, we investigate this dual scheduling problem and obtain the optimal closed-form solution for monomial cost functions.

Since the energy-bit function is assumed to be \eqref{eq:Et_exp}, the associated bit-energy cost function is given by inverting: \begin{equation}
	b_t = \left(g_t E_t\right)^{\frac{1}{n}}.
\end{equation}
Then the dual problem is given by
\begin{eqnarray}
	&\max\limits_{E_T,\cdots,E_1}& \E\left[\sum_{t=1}^T \left(g_t E_t\right)^{\frac{1}{n}}\right] \nonumber \\
	&\text{subject to}& \sum_{t=1}^T E_t = E \label{eq:dual} \\
	&& E_t\ge 0,\qquad \forall t. \nonumber
\end{eqnarray}

To derive a DP for causal dual scheduling, we introduce a state variable $\mathcal{E}_t$ that denotes the remaining energy at slot $t$.
Thus, the optimization \eqref{eq:dual} can be formulated as
\begin{equation} \label{eq:Wt_csl}
	\begin{split}
	&W_t^\text{csl}(\mathcal{E}_t,g_t;n) = \\
	&\begin{cases}
		\max\limits_{0\le E_t\le \mathcal{E}_t} \left( (g_t E_t)^{\frac{1}{n}}+\bar{W}_{t-1}^\text{csl}(\mathcal{E}_t-E_t;n)\right),&  t\ge 2\\
		(g_1 \mathcal{E}_1)^{\frac{1}{n}},&  t=1,
	\end{cases}
	\end{split}
\end{equation}
where $\bar{W}_{t-1}^\text{csl}(\mathcal{E};n)=\E_g[W_{t-1}^\text{csl}(\mathcal{E},g;n)]$ denotes the cost-to-go function for the dual scheduling problem. This dynamic optimization \eqref{eq:Wt_csl} can be solved similar to the primal problem and its optimal solution is summarized as follows:
\begin{theorem}
	The optimal causal rate maximization scheduling \eqref{eq:Wt_csl} is given by
	\begin{equation} \label{eq:Et_csl}
	\boxed{
		E_t^\text{csl}(\mathcal{E}_t,g_t;n) 
		= \begin{cases} \mathcal{E}_t \left(\frac{(g_t)^{\frac{1}{n-1}}}{(g_t)^{\frac{1}{n-1}} + (\zeta_{n,t-1})^{\frac{1}{n-1}}}\right), & t\ge 2,\\
		\mathcal{E}_1,& t=1,
		\end{cases}}
	\end{equation}
	where \vspace{-10pt}
	\begin{equation}
		\zeta_{n,t} = \begin{cases}
		\left(\E\left[\left((g_t)^{\frac{1}{n-1}} + (\zeta_{n,t-1})^{\frac{1}{n-1}}\right)^{\frac{n-1}{n}}\right]\right)^n, & t\ge 2,\\
		\left(\E[g^{\frac{1}{n}}]\right)^n, & t=1.
		\end{cases}
	\end{equation}
\end{theorem}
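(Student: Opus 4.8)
The plan is to mirror the inductive argument used for the primal Theorem~\ref{thm:bt_csl}, since the dual dynamic program \eqref{eq:Wt_csl} has exactly the same recursive structure with the monomial current cost $b_t^n/g_t$ replaced by its concave inverse $(g_t E_t)^{1/n}$. The first thing I would do is conjecture the analogue of the value-function formula \eqref{eq:barJt_csl}, namely
\begin{equation*}
	\bar{W}_t^\text{csl}(\mathcal{E};n) = \left(\zeta_{n,t}\,\mathcal{E}\right)^{\frac{1}{n}} = (\zeta_{n,t})^{\frac{1}{n}}\,\mathcal{E}^{\frac{1}{n}},\qquad t=1,2,\cdots,
\end{equation*}
and prove it by induction on $t$, extracting the scheduling rule \eqref{eq:Et_csl} and the recursion for $\zeta_{n,t}$ as by-products. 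The base case $t=1$ is immediate: $W_1^\text{csl}(\mathcal{E}_1,g_1;n)=(g_1\mathcal{E}_1)^{1/n}$ gives $\bar{W}_1^\text{csl}(\mathcal{E};n)=\E[g^{1/n}]\,\mathcal{E}^{1/n}$, which matches the conjecture because $(\zeta_{n,1})^{1/n}=\E[g^{1/n}]$ by definition.

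For the inductive step I would assume the conjectured form holds at $t-1$ and substitute it into \eqref{eq:Wt_csl}, reducing the problem to the one-dimensional program
\begin{equation*}
	\max_{0\le E_t\le\mathcal{E}_t}\left((g_t E_t)^{\frac{1}{n}} + (\zeta_{n,t-1})^{\frac{1}{n}}(\mathcal{E}_t-E_t)^{\frac{1}{n}}\right).
\end{equation*}
Because $n>1$, the map $x\mapsto x^{1/n}$ is strictly concave, so the objective is concave in $E_t$ and the first-order condition locates the unique global maximizer. Setting the derivative to zero yields $g_t^{1/n}E_t^{(1-n)/n}=(\zeta_{n,t-1})^{1/n}(\mathcal{E}_t-E_t)^{(1-n)/n}$, which rearranges to the ratio $E_t:(\mathcal{E}_t-E_t)=g_t^{1/(n-1)}:(\zeta_{n,t-1})^{1/(n-1)}$ and hence to the boxed formula \eqref{eq:Et_csl}; the strict positivity of both terms keeps the optimum interior, so the box constraint never binds.

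To close the induction I would substitute the optimal $E_t$ back into the objective. Writing $a=g_t^{1/(n-1)}$ and $c=(\zeta_{n,t-1})^{1/(n-1)}$, the two terms collapse using the exponent identities $g_t^{1/n}a^{1/n}=a$ and $(\zeta_{n,t-1})^{1/n}c^{1/n}=c$, so that $W_t^\text{csl}(\mathcal{E}_t,g_t;n)=\mathcal{E}_t^{1/n}(a+c)^{(n-1)/n}$. Taking the expectation over $g_t$ then gives $\bar{W}_t^\text{csl}(\mathcal{E};n)=\mathcal{E}^{1/n}\,\E[(a+c)^{(n-1)/n}]$, and matching this against the conjecture forces exactly the stated recursion $\zeta_{n,t}=(\E[(g_t^{1/(n-1)}+(\zeta_{n,t-1})^{1/(n-1)})^{(n-1)/n}])^n$, completing the induction. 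I expect the only genuinely delicate points to be the bookkeeping of the fractional exponents in this collapse and confirming concavity of the objective, which, unlike the convex primal, is what guarantees the stationary point is a maximum rather than a minimum; everything else is a direct transcription of the primal argument.
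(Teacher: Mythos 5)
Your proposal is correct and follows exactly the route the paper intends: the paper proves this theorem by remarking that it is ``solved similar to the primal problem,'' i.e., by the same induction used for Theorem~\ref{thm:bt_csl}, which is precisely what you carry out --- conjecturing the value function $\bar{W}_t^\text{csl}(\mathcal{E};n)=(\zeta_{n,t})^{1/n}\mathcal{E}^{1/n}$, solving the resulting one-dimensional concave program by the first-order condition, and recovering the recursion for $\zeta_{n,t}$ by substituting back and taking the expectation. Your added observations (strict concavity for $n>1$ guaranteeing the stationary point is the global maximizer, and interiority so the constraint $0\le E_t\le\mathcal{E}_t$ never binds) are exactly the details the paper leaves implicit, and your exponent bookkeeping checks out.
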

The optimal energy scheduler \eqref{eq:Et_csl} has very similar interpretation with the optimal bit scheduler \eqref{eq:bt_csl} from their scheduling formulations. That is, the ratio of the amount of energy to schedule $E_t$ to the amount of energy to defer $(\mathcal{E}_t-E_t)$ is equal to the ratio of $g_t^{\frac{1}{n-1}}$ to ${\zeta}_{n,t-1}^{\frac{1}{n-1}}$, and thus, the similar delay-limited opportunistic scheduling interpretation can be applied. Notice that the quantities $\zeta_{n,t}$ and $\xi_{n,t}$ are different.

\section{Non-Causal Scheduling}

This section briefly considers the case where the scheduler has knowledge of the channel states non-causally in advance, i.e., $g_T, g_{T-1},\cdots, g_1$ are known at $t=T$.

\subsection{Energy Minimization Scheduling}

In this non-causal setting, the optimization \eqref{eq:min_E_sums} is simply given by \vspace{-10pt}
\begin{equation} \label{eq:minE_ncsl}
	\min_{b_T,\cdots,b_1} \sum_{t=1}^T \frac{b_t^n}{g_t} 
\end{equation}
subject to $\sum_{t=1}^T b_t=B$ and $b_t\ge 0$ for all $t$. This is a convex optimization and can be solved as:
\begin{theorem} \label{thm:bt_ncsl}
	The optimal non-causal scheduling to \eqref{eq:minE_ncsl} is given by \vspace{-10pt}
	\begin{equation} \label{eq:bt_ncsl}
		\boxed{b_t^\text{ncsl}(\beta_t,g_t; n) = \beta_t \frac{g_t^{\frac{1}{n-1}}}{\sum_{s=1}^t g_s^{\frac{1}{n-1}}}. }
	\end{equation}
\end{theorem}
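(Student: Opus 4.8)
The plan is to treat \eqref{eq:minE_ncsl} as a standard convex program and solve it directly through the Karush--Kuhn--Tucker (KKT) conditions rather than through dynamic programming, since in the non-causal setting all channel realizations $g_1,\dots,g_T$ are known and the minimization is a single deterministic problem. For $n>1$ each summand $b_t^n/g_t$ is strictly convex in $b_t$ (because $g_t>0$), so the objective is strictly convex, the feasible set $\{\,\sum_t b_t = B,\ b_t\ge 0\,\}$ is convex, and a unique minimizer exists that is fully characterized by the KKT conditions. A preliminary step is to argue that the nonnegativity constraints are inactive: the stationary point of the equality-constrained relaxation will turn out to have every $b_t>0$ (since $g_t>0$ and $B>0$), so I may drop the multipliers associated with $b_t\ge 0$ and retain only the budget constraint.

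Next I would introduce a single Lagrange multiplier $\lambda$ for $\sum_t b_t = B$ and set the gradient to zero. Stationarity yields $n\,b_t^{\,n-1}/g_t = \lambda$ for every $t$, hence
\begin{equation}
	b_t = \Big(\tfrac{\lambda}{n}\Big)^{\frac{1}{n-1}} g_t^{\frac{1}{n-1}},
\end{equation}
so the optimal allocation is proportional to $g_t^{1/(n-1)}$ --- the deterministic analogue of the opportunistic splitting already seen in Theorem~\ref{thm:bt_csl}. Imposing $\sum_t b_t = B$ then eliminates the multiplier, giving $\big(\lambda/n\big)^{1/(n-1)} = B\big/\sum_{s=1}^{T} g_s^{1/(n-1)}$ and therefore the global closed form $b_t = B\,g_t^{1/(n-1)}\big/\sum_{s=1}^{T} g_s^{1/(n-1)}$.

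Finally I would recast this closed form into the sequential notation of the statement. By definition $\beta_t = B-\sum_{s=t+1}^{T} b_s$ is the number of bits still unserved at the start of slot $t$; substituting the expression just derived collapses the telescoping sum to $\beta_t = B\sum_{s=1}^{t} g_s^{1/(n-1)}\big/\sum_{s=1}^{T} g_s^{1/(n-1)}$, and dividing the two displays reproduces exactly \eqref{eq:bt_ncsl}. Because the program is convex, KKT stationarity is also sufficient and no second-order verification beyond strict convexity is required, so I do not expect a genuine obstacle here; the only points needing care are the routine justification that the nonnegativity multipliers vanish and the bookkeeping that shows the global closed form and the recursive $\beta_t$-form coincide. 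An equivalent route is backward induction on $t$ mirroring the proof of Theorem~\ref{thm:bt_csl}, but the direct KKT argument is shorter for the non-causal problem since there is no cost-to-go expectation to propagate.
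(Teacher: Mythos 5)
Your proposal is correct and follows essentially the same route as the paper: the paper's proof also applies the standard Lagrangian method to obtain $b_t^\text{ncsl} = B\,g_t^{1/(n-1)}\big/\sum_{s=1}^T g_s^{1/(n-1)}$ and then rewrites it in terms of the queue state $\beta_t$. You have simply made explicit the KKT stationarity step, the inactivity of the nonnegativity constraints, and the telescoping computation of $\beta_t$, all of which the paper leaves implicit in its citation of the standard method.
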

\begin{proof}
	The standard Lagrangian method \cite{Boyd_Book04} yields the solution: \vspace{-10pt}
	\begin{equation} \label{eq:bt_ncsl_withB}
		b_t^\text{ncsl} = B \frac{g_t^{\frac{1}{n-1}}}{\sum_{s=1}^T g_s^{\frac{1}{n-1}}}.
	\end{equation}
	If we express this solution with the queue state variable $\beta_t$, we obtain the result.
\end{proof}
The scheduling policy \eqref{eq:bt_ncsl} can be interpreted with the ratio argument as with the causal cases, i.e., 
\begin{equation} \label{eq:bt_ncsl_ratios}
	b_T^\text{ncsl} : b_{T-1}^\text{ncsl} : \cdots : b_1^\text{ncsl} 
	= g_T^{\frac{1}{n-1}} : g_{T-1}^{\frac{1}{n-1}} : \cdots : g_1^{\frac{1}{n-1}}.
\end{equation}

\subsection{Rate Maximization Scheduling}

Similarly we can fomulate  the non-causal rate maximition as \vspace{-5pt}
\begin{equation} \label{eq:maxB_ncsl}
	\max \sum_{t=1}^T (g_t E_t)^{\frac{1}{n}},
\end{equation}
subject to $\sum_{t=1}^T E_t = E$ and $E_t\ge 0$ for all $t$. 
\begin{theorem}
	The optimal non-causal scheduling to \eqref{eq:maxB_ncsl} is given by \vspace{-5pt}
	\begin{equation} \label{eq:Et_ncsl}
		\boxed{E_t^\text{ncsl} = \mathcal{E}_t\frac{g_t^{\frac{1}{n-1}}}{\sum_{s=1}^{t} g_s^{\frac{1}{n-1}}}.}
	\end{equation}
\end{theorem}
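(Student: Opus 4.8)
The plan is to mirror the proof of Theorem~\ref{thm:bt_ncsl}, since \eqref{eq:maxB_ncsl} is again a convex program---here a concave maximization, because for $n>1$ the map $E_t \mapsto (g_t E_t)^{\frac{1}{n}}$ is strictly concave in $E_t$ and the feasible set $\{\sum_t E_t = E,\ E_t\ge 0\}$ is a simplex. Concavity guarantees that the stationarity conditions are not merely necessary but sufficient for the global optimum, so it suffices to locate the unique stationary point.

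First I would form the Lagrangian
\[
	L = \sum_{t=1}^T (g_t E_t)^{\frac{1}{n}} - \lambda\left(\sum_{t=1}^T E_t - E\right),
\]
and differentiate with respect to each $E_t$. Setting $\partial L/\partial E_t = 0$ gives $\frac{1}{n} g_t^{\frac{1}{n}} E_t^{\frac{1}{n}-1} = \lambda$, which inverts to $E_t = (n\lambda)^{-\frac{n}{n-1}} g_t^{\frac{1}{n-1}}$. The key structural observation is therefore that the optimal energy allocation is proportional to $g_t^{\frac{1}{n-1}}$---exactly the same channel-weighting exponent as in the primal non-causal policy \eqref{eq:bt_ncsl}. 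Since $g_t>0$ for typical fading, every $E_t$ is strictly positive and the inequality constraints $E_t\ge 0$ are inactive, so no further case analysis of the KKT conditions is needed.

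Next I would eliminate the multiplier by imposing the budget $\sum_{t=1}^T E_t = E$, which fixes the proportionality constant and yields the absolute form $E_t = E\, g_t^{\frac{1}{n-1}} / \sum_{s=1}^T g_s^{\frac{1}{n-1}}$, in direct analogy with \eqref{eq:bt_ncsl_withB}. Finally, to recast this in the state-variable form \eqref{eq:Et_ncsl}, I would use the definition of the remaining energy $\mathcal{E}_t = E - \sum_{s=t+1}^T E_s$. Substituting the absolute allocation gives $\mathcal{E}_t = E \sum_{s=1}^t g_s^{\frac{1}{n-1}} / \sum_{s=1}^T g_s^{\frac{1}{n-1}}$, so the global ratio $E / \sum_{s=1}^T g_s^{\frac{1}{n-1}}$ coincides with the local ratio $\mathcal{E}_t / \sum_{s=1}^t g_s^{\frac{1}{n-1}}$; replacing the former by the latter produces \eqref{eq:Et_ncsl}.

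I do not anticipate a genuine obstacle: the problem is a smooth strictly concave maximization over a simplex, so the actual content is the algebra of inverting the stationarity condition (tracking the change of exponent from $\frac{1}{n}$ to $\frac{1}{n-1}$) and the bookkeeping of the descending time index in the state-variable rewrite. The one point worth stating carefully is why the absolute-to-state-variable conversion is legitimate, namely that the identity $\mathcal{E}_t / \sum_{s=1}^t g_s^{\frac{1}{n-1}} = E / \sum_{s=1}^T g_s^{\frac{1}{n-1}}$ holds for every $t$, which is exactly what makes the closed form consistent across all slots.
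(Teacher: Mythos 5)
Your proposal is correct and follows essentially the same route the paper intends: the paper omits the proof of this theorem precisely because it is the mirror image of the proof of Theorem~\ref{thm:bt_ncsl}, i.e., the standard Lagrangian method giving an allocation proportional to $g_t^{\frac{1}{n-1}}$, followed by the rewrite in terms of the state variable $\mathcal{E}_t$. Your added details---strict concavity making stationarity sufficient, positivity of the stationary point rendering the constraints $E_t\ge 0$ inactive, and the explicit identity $\mathcal{E}_t/\sum_{s=1}^{t} g_s^{\frac{1}{n-1}} = E/\sum_{s=1}^{T} g_s^{\frac{1}{n-1}}$---are exactly the steps the paper leaves implicit, and they are all correct.
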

Like \eqref{eq:bt_ncsl_ratios}, we can also observe that
\begin{equation}
	E_T^\text{ncsl} : E_{T-1}^\text{ncsl} : \cdots : E_1^\text{ncsl} 
	= g_T^{\frac{1}{n-1}} : g_{T-1}^{\frac{1}{n-1}} : \cdots : g_1^{\frac{1}{n-1}},
\end{equation}
and thus, we obtain \vspace{-5pt}
\begin{equation}
	\frac{b_t^\text{ncsl}}{B}=\frac{E_t^\text{ncsl}}{E}.
\end{equation}
This implies that the optimal bit distribution ratio during the $T$ slots for the primal problem is identical to the energy distribution ratio for the dual problem.

\section{Conclusion}

We have investigated the problem of bit/energy scheduling over a finite time duration assuming that the energy-bit cost function is a monomial. In both the primal (minimizing energy expenditure subject to a bit constraint) scheduling and the dual (maximizing bit transmission under an energy constraint) scheduling problem, we derived closed-form scheduling functions. The optimal bit/energy allocations are determined by the ratio of $g_t^{\frac{1}{n-1}}$ and a channel statistical quantity. From the monotonicity of this statistical quantity, we interpreted that the optimal scheduler behaves more opportunistically in the initial time steps and less so as the deadline approaches.

\bibliographystyle{IEEEtran}
\bibliography{monomial}
\end{document}